\documentclass[aps,pra,notitlepage,10pt,a4paper]{revtex4-1}
\usepackage[utf8]{inputenc}
\usepackage[sc,osf]{mathpazo}
\usepackage[T1]{fontenc}
\usepackage{amsfonts}
\usepackage{amssymb}
\usepackage{amsmath}
\usepackage{amsthm}
\usepackage{graphics}
\usepackage{graphicx}
\usepackage{braket}
\usepackage[colorlinks=true,linkcolor=blue,
urlcolor=blue,citecolor=blue]{hyperref}
\newtheorem{theorem}{Theorem}

\begin{document}
\title{Locally distinguishing quantum states with limited classical communication}

\author{Saronath Halder}
\email{saronath.halder@gmail.com}
\affiliation{Quantum Information and Computation Group, Harish-Chandra Research Institute, HBNI, Chhatnag Road, Jhunsi, Prayagraj (Allahabad) 211 019, India}

\author{Chirag Srivastava}
\email{chiragsrivastava@hri.res.in}
\affiliation{Quantum Information and Computation Group, Harish-Chandra Research Institute, HBNI, Chhatnag Road, Jhunsi, Prayagraj (Allahabad) 211 019, India}

\begin{abstract}
We consider different settings of the task to distinguish pure orthogonal quantum states under local operations and a limited amount of classical communication. In the first setting, the spatially separated parties are allowed to perform only local projective measurements without any classical communication during the measurements. Under such a restricted class of operations, if the states are indistinguishable, then within a second group of settings, the parties are allowed to use an additional resource for the distinguishing. The additional resource is either a pure entangled state or multiple identical copies of the given states. Comparisons between these two types of resources are also done in certain cases. Both probabilistic and perfect discrimination of the states are considered for the second group of settings. Within a third setting, the parties perform local projective measurements with a restriction on the availability of classical communication during the measurements. But in this setting the parties are allowed to use a maximally entangled state as a resource. 
\end{abstract}
\maketitle

\section{Introduction}\label{sec1}
Distinguishing quantum states forms a fundamental aspect of quantum information processing. In particular, to decode the classical information encoded in a quantum system, it is necessary to distinguish among the possible states of the system by performing measurement(s) on the given quantum system. If the states to be distinguished are nonorthogonal states, then they cannot be distinguished perfectly \cite{Nielsen00}. On the other hand, orthogonal quantum states can always be distinguished perfectly by performing a measurement on the {\it whole} quantum system. In brief, the task of distinguishing quantum states or the state discrimination problem can be defined in the following way: Suppose, a quantum system is given. The system is prepared in a state, taken from a known set. The task is to identify the state by performing measurement(s) on the quantum system.

If a composite quantum system is distributed among several spatially separated parties, then it is usually difficult to perform a measurement on the whole quantum system. In such a situation, it is not always possible to distinguish among the possible states of the system by performing measurements on the local parts of the system and communicating measurement outcomes classically among the parties, even if, to begin with, the states are pairwise orthogonal. Clearly, for composite quantum systems, this class of operations which is commonly known as  local operations and classical communication (LOCC), constitutes a strict subset of all physically realizable operations. Note that in a distributed setting, quantum communication may trivialize a task of distinguishing quantum states, and in any case, such communication is usually difficult in experiments. Thus, we do not consider any quantum communication here, between the spatially separated parties. The question of distinguishability or its absence for pure orthogonal quantum states under LOCC has been considered in several works \cite{Bennett99-1, Bennett99, Walgate00, Ghosh01, Groisman01, Walgate02, Ghosh02, DiVincenzo03, Horodecki03, Badziag03, Ghosh04, Fan04, Hayashi06, Nathanson05, Watrous05, SenDe06, Owari06, Horodecki07, Xin08, Yu11, Bandyopadhyay11, Yu12, Childs13, Cosentino14, Halder19}. If such states are not distinguishable under LOCC, then there are two usual directions to proceed. One is to find an optimal strategy to approximately distinguish those states under LOCC (in this regard, see Refs.~\cite{Virmani01, Chefles00, Jezek02, Eldar03, Bergou04, Barnett09, Bergou10, Weir17, Singal19} for the minimal-error discrimination of nonorthogonal states). Otherwise, those states can be distinguished perfectly under LOCC using suitable resources such as pure state entanglement shared among the spatially separated parties \cite{Cohen07, Cohen08, Bandyopadhyay09-1, Bandyopadhyay10, Bandyopadhyay15, Bandyopadhyay16, Zhang16, Bandyopadhyay18, Zhang18, Halder18, Li19, Rout19, Halder19-2}, or multiple identical copies of the states to be distinguished \cite{Ghosh04, Bandyopadhyay11, Yu14, Li17}. In this work we consider the distinguishability of orthogonal pure states under classes of operations that are more restricted and experimentally friendlier, compared to LOCC, using both types of resources, i.e., shared entanglement in pure form and multiple identical copies of the given states. We also mention that the states to be distinguished are always considered to be equally probable. Note that a particular party can possess more than one qubit/qudit at a time and any type of measurement performed by that particular party on the qubit(s)/qudit(s) in a single location is a local measurement here.

One of the main goals of the present work is to understand the role of classical communication (CC) to distinguish the quantum states when the quantum system is distributed among spatially separated parties. We also remember that performing measurements only after, and depending on the results of, the previous measurements which are performed in another location is experimentally difficult. Here, for most of the paper, we study state distinguishing protocols that involve local projective measurements (LP) without CC (during the measurements) and we say these protocols as LP protocols. The exception is when we consider the {\it incomplete teleportation-based protocol}. The LP protocols are clearly a much more restricted class of operations, compared to LOCC for the state distinguishing problem. We compensate for this restriction separately by two types of resources, viz., shared entanglement and multiple copies of the states to be distinguished. Specifically, we discuss the task of distinguishing quantum states via local projective measurements and shared entanglement (LPSE). In the last couple of decades, there were a few works where local operations and shared entanglement (LOSE) was considered for distinguishing quantum states \cite{Groisman02, Groisman15}. Within a scenario where CC is not allowed during the measurements (we simply say this as without CC), the measurements performed by one party are independent of the measurement outcomes obtained by the other parties. But we remember that once the measurements are accomplished by all the parties, CC is always necessary to identify the state of the system correctly. We also study LP without CC (during the measurements) to distinguish quantum states, provided that a finite number of identical copies of the states to be distinguished are available. Previously, in Refs.~\cite{Ghosh04, Bandyopadhyay11, Yu14, Li17}, the task of distinguishing orthogonal states was considered when multiple identical copies of the given states were available, but there, the operation was LOCC. 

Another aspect of the present work is to explore the (in)distinguishability of quantum states via an incomplete teleportation-based protocol. In a complete teleportation protocol, when an unknown qubit is teleported from one location to another, a two-qubit maximally entangled state (MES) is provided to the spatially separated parties to share and two cbits of information is sufficient to communicate the measurement outcome from one party to another. By an incomplete teleportation-based protocol, we mean that the parties can use an MES but only one cbit of information is allowed, to communicate the measurement outcome from one party to another. In Ref.~\cite{Bandyopadhyay09-1}, the authors used the incomplete teleportation-based protocol, to reduce the average entanglement consumption for distinguishing nonmaximally entangled states (nMESs). For the distinguishability of nMESs see also Ref.~\cite{Gungor16}. 

Here in all the scenarios, CC is limited, i.e., either the parties use CC only after all the measurements are completed, or they use an incomplete teleportation-based protocol, where the availability of CC to communicate the measurement outcome is restricted, in comparison to the standard (complete) teleportation protocol. We now arrange the main findings of the present work as the following: (i) In the problem of distinguishing a specific product basis in $2\otimes2$ (by $m_1\otimes m_2\otimes\dots$, we mean here the Hilbert space $\mathbb{C}^{m_1}\otimes\mathbb{C}^{m_2}\otimes\dots$; however, for simplicity, we consider here all the coefficients as real), we find the relation between the probability of success in minimal-error state discrimination and the entanglement of the resource state (an nMES) via a particular protocol under LPSE. In this regard, we identify a class of nMESs with which it is possible to improve the probability of success compared to that when only LP is allowed. We also calculate the probability of success to distinguish any product basis in $2\otimes2$ using an MES via a similar protocol. In this context, we discuss the status, with respect to the distinguishability of any two or three states in $2\otimes2$ under LPSE. These are contained in Sec.~\ref{sec2subsec1}. We then show that any two-qubit product basis can be perfectly distinguished by LP if two identical copies of the given states are available. This shows the superiority of multiple copies as a resource over shared entanglement, while considering state distinguishability under LP. This result is presented in Sec.~\ref{sec2subsec2}. (ii) In the problem of distinguishing three or four Bell states in $2\otimes2$, within a minimal-error quantum state discrimination, we find the optimal probability under LP while the resource is an nMES. This probability is the same as the probability which is achievable under LOCC or separable measurements, with the resource still being an nMES. This is given in Sec.~\ref{sec3subsec1}. We then show, in Sec.~\ref{sec3subsec2}, that if two of three mutually orthogonal two-qubit states are Bell states, then the trio can be perfectly distinguished by LPSE. (iii) Furthermore, we show, in Sec.~\ref{sec4}, that three two-qubit nonmaximally entangled Bell states (not equally entangled) can be perfectly distinguished by an incomplete teleportation-based protocol. (iv) Then we again go back to the problem of distinguishing quantum states in a multi-copy scenario and show that any product basis can be perfectly distinguished by LP if finite copies of the given states are available. This is given in Sec.~\ref{sec5subsec1}. Thereafter, we compare the preceding result with the distinguishability of entangled states in the multi-copy scenario under LP. This comparison is given in Sec.~\ref{sec5subsec2}.

\section{Discrimination of two-qubit product bases}\label{sec2} 
In this section, we consider the distinguishability of two-qubit product states. Along with LP, additional resources, i.e., shared entanglement or multiple copies of the given states are considered.
\subsection{Discrimination by LPSE}\label{sec2subsec1}
The quantum systems we consider first are two-qubit bipartite systems shared between A(lice) and B(ob).  We begin with a specific two-qubit product basis \cite{Groisman01}, viz.,
\begin{equation}\label{eq1}
\begin{array}{l}
\ket{\psi_1} = \ket{0}\ket{0},~\ket{\psi_2} = \ket{0}\ket{1},~\ket{\psi_3} = \ket{1}\ket{0+1},~\ket{\psi_4} = \ket{1}\ket{0-1},
\end{array}
\end{equation}
where $\ket{v_1+v_2}\equiv(1/\sqrt{2})(\ket{v_1}+\ket{v_2})$ and the symbol $\ket{v_1}\ket{v_2}$ means that $\ket{v_1}$ belongs to Alice and $\ket{v_2}$ belongs to Bob. We use such notations throughout the text and for simplicity we ignore the normalization constants where possible. This basis can be perfectly distinguished by local operations and one-way (from Alice to Bob) classical communication ($1$-LOCC). But if CC is not allowed, then a $2\otimes2$ maximally entangled state is necessary and also sufficient for perfect discrimination of the above basis \cite{Groisman02, Groisman15}. In this sense, it is a ``semilocal'' basis. The discrimination process can be realized by a simple protocol: Alice measures her part in $\{\ket{00}, \ket{01}, \ket{10} \pm \ket{11}\}$ basis and Bob measures his part in the well-known Bell basis $\{\ket{\phi^{\pm}} = (1/\sqrt{2})(\ket{00} \pm \ket{11}), \ket{\psi^{\pm}} = (1/\sqrt{2})(\ket{01} \pm \ket{10})\}$ \cite{Ghosh01}. If Alice and Bob are allowed to perform only LP, and there is no additional resource available, then a maximal success probability of $(1/2 + 1/2\sqrt{2})\approx 0.85$ can be achieved within the scenario of minimal-error quantum state discrimination. The corresponding protocol can be found in \cite{Groisman01}. We have assumed that the states of Eq.~(\ref{eq1}) are provided with equal probabilities. 
\begin{theorem}\label{theo1}
Even a nonmaximally entangled state can improve the success probability in a minimal-error quantum state discrimination to values beyond $0.85$ under LP while distinguishing the states of Eq.~(\ref{eq1}). The probability is a function of the entanglement of the resource state. The ensemble is formed by states of Eq.~(\ref{eq1}) with equal a priori probabilities. 
\end{theorem}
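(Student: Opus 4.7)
The plan is to exhibit an explicit LPSE protocol that uses a generic nonmaximally entangled two-qubit resource $|\chi\rangle_{A'B'} = \cos\theta\ket{00} + \sin\theta\ket{11}$, with $\theta \in (0,\pi/4]$, shared between Alice (qubit $A'$) and Bob (qubit $B'$), and then compute the minimal-error success probability of that protocol as a function of $\theta$. The idea is to directly generalize the MES protocol of \cite{Groisman02, Groisman15} that is cited just before the theorem: Alice measures the composite pair $(A,A')$ in the fixed basis $\{\ket{00},\ket{01},(\ket{10}\pm\ket{11})/\sqrt{2}\}$, while Bob measures $(B,B')$ in the Bell basis. No CC is exchanged during the measurements, so this is an LP protocol on the extended Hilbert space that consumes the nMES as a resource, hence LPSE.

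First I would write out $\ket{\psi_i}_{AB}\otimes\ket{\chi}_{A'B'}$ for each $i\in\{1,2,3,4\}$ in the tensor factorization $(AA')\otimes(BB')$. Expanding the $(AA')$ register in Alice's measurement basis and the $(BB')$ register in the Bell basis produces, for each input, an explicit amplitude $c^{(i)}_{a,b}$ attached to every joint outcome $(a,b)$. Two features should emerge. At $\theta=\pi/4$ the amplitudes factor so that each input gives a single Bell outcome on Bob's side for each Alice outcome, reproducing perfect discrimination and recovering the MES result. At $\theta=0$ the nMES collapses to a product state, the protocol reduces to pure LP on $(A,B)$ with an uncorrelated ancilla, and the optimal decoding achieves the known value $1/2+1/(2\sqrt{2})\approx 0.85$ of \cite{Groisman01}. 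Intermediate $\theta$ interpolates continuously between these endpoints.

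Second, I would assemble the $4\times 16$ table of conditional probabilities $p(a,b\,|\,\psi_i)=|c^{(i)}_{a,b}|^2$ and, because the priors are equal, apply the optimal maximum-likelihood decoding, giving
\begin{equation*}
P_{\mathrm{succ}}(\theta)\;=\;\frac{1}{4}\sum_{a,b}\max_{i}\,p(a,b\,|\,\psi_i).
\end{equation*}
I expect the expression for $P_{\mathrm{succ}}(\theta)$ to be of the form $1/2 + g(\sin 2\theta)$ with $g$ a strictly increasing function of $\sin 2\theta$ on $(0,\pi/4]$. Since the entanglement $E(|\chi\rangle)$ of a two-qubit pure state with Schmidt coefficients $(\cos\theta,\sin\theta)$ is itself a monotone function of $\sin 2\theta$, this simultaneously establishes (i) that every nMES in this family yields $P_{\mathrm{succ}}(\theta)>0.85$, and (ii) that the achieved probability is a genuine function of the entanglement of the resource state.

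The main obstacle I anticipate is not conceptual but combinatorial: several joint outcomes $(a,b)$ will receive nonzero weight from more than one input $\ket{\psi_i}$, so the winner of the $\max_i$ decoding can switch as $\theta$ varies, making the closed form of $P_{\mathrm{succ}}(\theta)$ piecewise. I would resolve this by listing the outcomes by which pair of inputs compete on each outcome, comparing the corresponding quadratic forms in $\cos\theta$ and $\sin\theta$, and checking monotonicity piece by piece; continuity at any crossing point then glues the pieces together. A secondary care point is confirming that, within this fixed LPSE measurement scheme, no smarter decoding beats the maximum-likelihood rule — this follows from Helstrom's criterion since the induced measurement is already fixed and the priors are uniform.
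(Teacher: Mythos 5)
Your protocol is exactly the one the paper uses (Alice measures $AA'$ in $\{\ket{00},\ket{01},(\ket{10}\pm\ket{11})/\sqrt{2}\}$, Bob measures $BB'$ in the Bell basis, resource $a\ket{00}+b\ket{11}$), so the approach is the right one. However, two of the outcomes you anticipate from the calculation are wrong, and one of them affects the conclusion you draw. Carrying the expansion through, $\ket{\psi_1}$ and $\ket{\psi_2}$ are identified perfectly, while $\ket{\psi_3}$ and $\ket{\psi_4}$ collapse, after the measurements, to a pair of nonorthogonal residual states of the form $a\ket{\cdot}\pm b\ket{\cdot}$; Helstrom discrimination of that pair succeeds with probability $\tfrac12+ab$, giving
\begin{equation*}
P_{\mathrm{succ}} \;=\; \frac{1}{2}\cdot 1 \;+\; \frac{1}{2}\left(\frac{1}{2}+ab\right) \;=\; \frac{3}{4}+\frac{ab}{2},
\end{equation*}
with $ab=\tfrac12\sin 2\theta$ the negativity of the resource. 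This is indeed of your predicted form $\tfrac12+g(\sin 2\theta)$ with $g$ increasing, but your two boundary checks fail: at $\theta=0$ the protocol gives $3/4$, not $0.85$, because the fixed Bell-basis measurement on $BB'$ degenerates to a computational-basis measurement on $B$, which is \emph{not} the optimal resource-free LP strategy (that one uses a basis rotated by $\pi/8$ and achieves $\tfrac12+\tfrac{1}{2\sqrt2}$). Consequently your claim (i), that \emph{every} nMES in the family beats $0.85$, is false: $\tfrac34+\tfrac{ab}{2}>\tfrac12+\tfrac{1}{2\sqrt2}$ only when $ab>(2-\sqrt2)/(2\sqrt2)\approx 0.207$, i.e.\ only for resource states above a negativity threshold. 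Weakly entangled resources, run through this fixed protocol, do \emph{worse} than the best unassisted LP protocol.

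This does not sink the proof of the theorem as stated, since the theorem only asserts that \emph{a} nonmaximally entangled state can push the probability beyond $0.85$, and any resource with $ab\gtrsim 0.207$ does so. But you should replace claim (i) with the threshold condition, and drop the assertion that the $\theta\to 0$ endpoint recovers the $0.85$ benchmark. Your worry about piecewise maximum-likelihood decoding also dissolves: for this particular ensemble the only ambiguity is the binary one between $\ket{\psi_3}$ and $\ket{\psi_4}$, which is handled in closed form by the Helstrom bound with no crossings in $\theta$.
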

\begin{proof} Suppose, the resource state is $\ket{\phi} = a\ket{0}\ket{0}+b\ket{1}\ket{1}$, with $a,b$ being nonzero positive numbers such that $a>b$ and $a^2+b^2=1$. Appended with the resource state, we rewrite the basis of Eq.~(\ref{eq1}) as 
\begin{equation}\label{eq2}
\begin{array}{c}
a\ket{00}\ket{00} + b\ket{01}\ket{01},~ a\ket{00}\ket{10} + b\ket{01}\ket{11},~a\ket{10}\ket{00} + a\ket{10}\ket{10} + b\ket{11}\ket{01} + b\ket{11}\ket{11}, \\[1 ex] a\ket{10}\ket{00} - a\ket{10}\ket{10} + b\ket{11}\ket{01} - b\ket{11}\ket{11}.
\end{array}
\end{equation}
Notice that the first two states can be perfectly distinguished under the above measurement scheme but for the last two states, there is a possibility of error. It can be shown that there is a connection between the negativity \cite{Zyczkowski98, Lee00, Vidal02, Plenio05} of the resource state and the probability of success in this case. It does not matter which party performs the measurement first. If Bob does it first, then Alice ends up with $a\ket{10}+b\ket{11}$ or $a\ket{10}-b\ket{11}$. If Alice does it first, then Bob ends up with any one of the states $a\ket{00}+b\ket{11}$, $a\ket{00}-b\ket{11}$, $a\ket{10}+b\ket{01}$, or $a\ket{10}-b\ket{11}$. After completing the measurement protocol, the success probability in the whole process is 
\begin{equation}\label{eq3}
\mathcal{P}_{s} = \left(\frac{3}{4} + \frac{ab}{2}\right) = \left(\frac{3}{4} + \frac{1}{2}\mathcal{N}\right),
\end{equation}
where $\mathcal{N} = ab$ is the negativity of the resource state. For $(3/4 + ab/2)>(1/2 + 1/2\sqrt{2})$, we must have $ab>(2-\sqrt{2})/2\sqrt{2}\approx0.2$. 
\end{proof} 
Note that in this section the error is minimum with respect to a particular protocol. Thus, in general the probability of success is not optimal. Let us now move over to the general case. An arbitrary two-qubit product basis can be written as the following (the form is originally given in Ref.~\cite{Walgate02}) 
\begin{equation}\label{eq4}
\begin{array}{c}
\ket{\widetilde{\psi_1}} = \ket{0}\ket{\theta},~\ket{\widetilde{\psi_2}} = \ket{0}\ket{\theta^\prime},~\ket{\widetilde{\psi_3}} = \ket{1}\left(\cos{\frac{\alpha}{2}}\ket{\theta} + \sin{\frac{\alpha}{2}}\ket{\theta^\prime}\right),~\ket{\widetilde{\psi_4}} = \ket{1}\left(\sin{\frac{\alpha}{2}}\ket{\theta} - \cos{\frac{\alpha}{2}}\ket{\theta^\prime}\right),
\end{array}
\end{equation}
where $\ket{\theta}$, $\ket{\theta^\prime}$ are normalized vectors such that $\langle\theta|\theta^\prime\rangle = 0$, and we take $0\leq\alpha\leq\pi/2$. Without loss of generality, we can ignore the relative phase between $\ket{\theta}$ and $\ket{\theta^\prime}$. Still remaining within the framework of minimal-error quantum state discrimination the above basis can be distinguished by LP with a maximum probability of success, $\cos^2{(\alpha/4)}$, when a measurement is performed in the $\{\cos{(\alpha/4)}\ket{\theta} + \sin{(\alpha/4)}\ket{\theta^\prime},~\sin{(\alpha/4)}\ket{\theta} - \cos{(\alpha/4)}\ket{\theta^\prime}\}$ basis by Bob (see the arguments given in Refs.~\cite{Groisman01, Corke17}). A two-qubit MES may not be sufficient to distinguish the above basis under LOSE when $\alpha\neq\pi/2$ \cite{Groisman02, Groisman15}. Nevertheless, we can compute the success probability when one ebit of entanglement is consumed using the following protocol: Alice measures her part in the basis, $\{\ket{00}$, $\ket{01}$, $\ket{10} \pm \ket{11}\}$. Bob measures his part in an entangled basis, $\{\cos{(\alpha^\prime/2)}\ket{\theta0} + \sin{(\alpha^\prime/2)}\ket{\theta^\prime1}$, $\sin{(\alpha^\prime/2)}\ket{\theta0} - \cos{(\alpha^\prime/2)}\ket{\theta^\prime1}$, $\cos{(\alpha^\prime/2)}\ket{\theta1} + \sin{(\alpha^\prime/2)}\ket{\theta^\prime0}$, $\sin{(\alpha^\prime/2)}\ket{\theta1} - \cos{(\alpha^\prime/2)}\ket{\theta^\prime0}\}$, where $0\leq\alpha^\prime\leq\pi/2$. The probability of success in this case is 
\begin{equation}\label{eq5}
\mathcal{P}_s^{(\frac{\alpha}{2}, \frac{\alpha^\prime}{2})} = \frac{1}{2} \left[\sin^2\left({\frac{\alpha+\alpha^\prime}{2}}\right)+\cos^2{\left(\frac{\alpha-\alpha^\prime}{2}\right)}\right].
\end{equation}
For $\alpha = \alpha^\prime = \pi/2$, $\mathcal{P}_s^{(\frac{\pi}{4}, \frac{\pi}{4})} = 1$. However, for a fixed value of $\alpha$, it is better to take $\alpha^\prime = \pi/2$. Keeping this value of $\alpha^\prime$, the above protocol is rather efficient when $\alpha$ is close to $\pi/2$. In particular, for the same value of $\alpha^\prime$ when $\alpha\geq\pi/3$, the above protocol gives an improved probability of success compared to $\cos^2{(\alpha/4)}$. However, there is no known measurement strategy which gives the maximal success probability for any $\alpha$. 

Notice that any three two-qubit pure orthogonal product states can be perfectly distinguished by LP. Without loss of generality, one can choose $\ket{\widetilde{\psi_1}}$, $\ket{\widetilde{\psi_2}}$, $\ket{\widetilde{\psi_3}}$ from Eq.~(\ref{eq4}). In this case, Alice measures in $\{\ket{0}, \ket{1}\}$ basis and Bob measures in $\{\ket{\theta}, \ket{\theta^\prime}\}$ basis. This argument holds true for any set whose constituents are a two-qubit pure entangled state and a two-qubit pure product state, provided the states are orthogonal to each other. Now, two orthogonal pure two-qubit entangled states can be written quite generally (we ignore the phase factors) as
\begin{equation}\label{eq6}
\begin{array}{c}
\ket{\overline{\psi_1}} = a_1\ket{0}\ket{\theta} + a_2\ket{1}\left(\cos{\frac{\alpha}{2}}\ket{\theta} + \sin{\frac{\alpha}{2}}\ket{\theta^\prime}\right),~\ket{\overline{\psi_2}} = a_3\ket{0}\ket{\theta^\prime} +  a_4\ket{1}\left(\sin{\frac{\alpha}{2}}\ket{\theta} - \cos{\frac{\alpha}{2}}\ket{\theta^\prime}\right).
\end{array}
\end{equation}
In the above equation $\ket{\theta}$, $\ket{\theta^\prime}$ are same as defined earlier and $a_1^2+a_2^2 = a_3^2+a_4^2 = 1$, $a_1,~ a_2,~ a_3,~ a_4$ are nonzero positive numbers. Interestingly, distinguishability of the above two states under LPSE is equivalent to the distinguishability of the states of Eq.~(\ref{eq4}). This is due to the arguments, given in \cite{Groisman15}. It is also known that if a set is constituted to a pure two-qubit entangled state and two pure two-qubit product states, then in general such a set can be written as
\begin{equation}\label{eq7}
\begin{array}{c}
\ket{\overline{\psi_1}} = a_1\ket{0}\ket{\theta} + a_2\ket{1}\left(\cos{\frac{\alpha}{2}}\ket{\theta} + \sin{\frac{\alpha}{2}}\ket{\theta^\prime}\right),~\ket{\widetilde{\psi_2}} = \ket{0}\ket{\theta^\prime},\ket{\widetilde{\psi_4}} = \ket{1}\left(\sin{\frac{\alpha}{2}}\ket{\theta} - \cos{\frac{\alpha}{2}}\ket{\theta^\prime}\right).
\end{array}
\end{equation}
In this equation also, $\ket{\theta}$, $\ket{\theta^\prime}$ are same as defined earlier and $a_1^2+a_2^2 = 1$, $a_1, a_2$ are nonzero positive numbers. The above states can be distinguished in the same way as those of Eq.~(\ref{eq4}). Hence, the protocols which are applicable for the states of Eq.~(\ref{eq4}) are also applicable for the states of Eq.~(\ref{eq6}) and Eq.~(\ref{eq7}) when local projective measurements and shared entanglement are allowed. But the success probabilities might be different.

\subsection{Discrimination in the multi-copy scenario}\label{sec2subsec2}
We now consider multi-copy state discrimination scenario for any two-qubit product basis. As stated earlier, in this scenario, multiple identical copies of all the states to be distinguished are available as a resource and the states are to be distinguished under LP. We do not allow here any joint measurement on multiple copies, remembering that a joint measurement is usually a difficult one to implement experimentally. The parties perform measurements on a single copy at a time. Once all the measurements are done on a particular copy then the next copy is provided. In between the copies the parties cannot communicate with each other here. We now present the following theorem. 
\begin{theorem}\label{theo2}
Given any product basis in $2\otimes2$, two identical copies are sufficient to distinguish the states under LP. If $\alpha \neq 0$ in Eq.~(\ref{eq4}), then two copies are also necessary.
\end{theorem}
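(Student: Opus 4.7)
\medskip

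\noindent\textbf{Proof proposal.} The plan is to separate the statement into its sufficiency and necessity parts. For sufficiency, I will exhibit an explicit LP protocol using the two copies; the key idea is that Alice's measurements on her two copies are uninformative beyond what a single copy gives (since her reduced states are $\ket{0}$ or $\ket{1}$ with certainty), but Bob can exploit the second copy by measuring his two qubits in \emph{different} rank-one projective bases, each tailored to one of the two orthogonal pairs generated by Alice's outcome.

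Concretely, I would have Alice measure both of her qubits in $\{\ket{0},\ket{1}\}$. The outcomes are $\ket{00}$ when the state is $\ket{\widetilde{\psi_1}}$ or $\ket{\widetilde{\psi_2}}$, and $\ket{11}$ when the state is $\ket{\widetilde{\psi_3}}$ or $\ket{\widetilde{\psi_4}}$, and are therefore deterministic. Bob measures one of his two qubits in the basis $\{\ket{\theta},\ket{\theta^\prime}\}$ (which perfectly discriminates $\ket{\widetilde{\psi_1}}$ from $\ket{\widetilde{\psi_2}}$) and the other in the basis
\[
\bigl\{\cos\tfrac{\alpha}{2}\ket{\theta}+\sin\tfrac{\alpha}{2}\ket{\theta^\prime},\ \sin\tfrac{\alpha}{2}\ket{\theta}-\cos\tfrac{\alpha}{2}\ket{\theta^\prime}\bigr\},
\]
which perfectly discriminates $\ket{\widetilde{\psi_3}}$ from $\ket{\widetilde{\psi_4}}$. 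A one-paragraph case check confirms that for each of the four states, at least one of Bob's two measurements has a deterministic outcome that, once combined classically with Alice's outcome after all measurements are done, uniquely identifies the state. This proves the sufficiency.

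For necessity, I would argue by contradiction: suppose $\alpha \neq 0$ and one copy were enough under LP. Then a perfect LP discrimination protocol on a single copy of the basis in Eq.~(\ref{eq4}) would exist. However, the discussion immediately preceding Theorem~\ref{theo2} establishes that the maximum single-copy LP success probability for this basis is $\cos^2(\alpha/4)$, which is strictly less than $1$ whenever $\alpha \neq 0$. This contradicts perfect discrimination, so two copies are indeed necessary.

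The sufficiency direction carries the real content: the crux is the observation that although Alice's two measurements are ``wasted'' in the sense of yielding identical outcomes, they nevertheless enable Bob to split his resources and perform two \emph{different} projective measurements on his two copies, which together cover both discrimination tasks that would otherwise require classical communication from Alice. I expect the case analysis confirming determinism of the matched outcome on each of the four states to be the only computational step, and it is routine. The necessity direction is essentially a corollary of the single-copy optimality result already recorded in the paper.
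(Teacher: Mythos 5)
Your proposal is correct and follows essentially the same route as the paper: Alice measures in $\{\ket{0},\ket{1}\}$, Bob measures his two copies in the two different bases $\{\ket{\theta},\ket{\theta^\prime}\}$ and $\{\cos\tfrac{\alpha}{2}\ket{\theta}+\sin\tfrac{\alpha}{2}\ket{\theta^\prime},\ \sin\tfrac{\alpha}{2}\ket{\theta}-\cos\tfrac{\alpha}{2}\ket{\theta^\prime}\}$, and the results are pooled only after all measurements. The necessity argument also matches the paper's, which likewise reduces to the known single-copy indistinguishability of the basis of Eq.~(\ref{eq4}) when $\alpha\neq 0$.
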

\begin{proof} Any product basis in $2\otimes 2$ can be written as in Eq.~(\ref{eq4}). We want to distinguish these four states under LP when $\{\ket{\widetilde{\psi_1}}^{\otimes2}, \ket{\widetilde{\psi_2}}^{\otimes2},\ket{\widetilde{\psi_3}}^{\otimes2},\ket{\widetilde{\psi_4}}^{\otimes2}\}$ is available. We remember that the copies are not available at the same time. They are  available one by one. The strategy is simple. Alice can perform only one measurement in $\{\ket{0},\ket{1}\}$ basis. Once this measurement is performed, then two states are eliminated. The information regarding this elimination will be known to both the parties when they will discuss all the measurement outcomes after all measurements are accomplished. Because during the measurements they cannot communicate with each other. The state of system is either from $\ket{\widetilde{\psi_1}}$ and $\ket{\widetilde{\psi_2}}$ or from $\ket{\widetilde{\psi_3}}$ and $\ket{\widetilde{\psi_4}}$. For the first copy Bob can perform the measurement in $\{\ket{\theta},\ket{\theta^\prime}\}$ basis. For the second copy, Bob can perform measurement in $\{\ket{\widetilde{\theta}},\ket{\widetilde{\theta^\prime}}\}$ basis, where $\ket{\widetilde{\theta}}$ = $\cos{(\alpha/2)}\ket{\theta} + \sin{(\alpha/2)}\ket{\theta^\prime}$, $\ket{\widetilde{\theta^\prime}}$ = $\sin{(\alpha/2)}\ket{\theta} - \cos{(\alpha/2)}\ket{\theta^\prime}$. Bob can change his mind and measure in the $\{\ket{\theta},\ket{\theta^\prime}\}$ basis on the second copy and in the $\{\ket{\widetilde{\theta}}, \ket{\widetilde{\theta^\prime}}$ basis on the first copy, without any change in implication. Notice that if $\ket{\widetilde{\theta}}$ can be written as a superposition of both $\{\ket{\theta}$ and $\ket{\theta^\prime}\}$ then two copies are also necessary. Because in this case the states of the set are not perfectly distinguished by local operations only on a single copy of each states \cite{Groisman01}. 
\end{proof}

In Ref.~\cite{Groisman02}, it was shown that the states of Eq.~(\ref{eq4}) cannot always be perfectly distinguished under LOSE, considering a particular type of protocol. Further, in Ref.~\cite{Groisman15}, the optimal amount of entanglement (sometimes many copies of MESs in $2\otimes2$) was derived, that is necessary to transform the states of Eq.~(\ref{eq4}) to $\{\ket{0}\ket{0},\ket{0}\ket{1},\ket{1}\ket{0},\ket{1}\ket{1}\}$ for specific values of ``$\alpha$''. But for certain values of ``$\alpha$'', there is no known protocol via which it is possible to distinguish the states under LOSE (or LPSE), when the resource is a finite number of copies of MESs in $2\otimes2$ shared between the parties. Therefore, we argue here that multiple identical copies of the given states are quite useful compared to shared entanglement, for the task of distinguishing product states in $2\otimes 2$ under LP. However, this may not be true for distinguishing entangled states. A particular such instance is discussed in a later portion, depicting the fact that though the equivalence between the states of Eq.~(\ref{eq4}) and that of Eq.~(\ref{eq6}) was established in Ref.~\cite{Groisman15} when the discrimination of these states is considered under LOSE but for distinguishing the states in the multi-copy scenario under LP, they may not be equivalent.

\section{Discrimination of entangled states by LPSE}\label{sec3}
In this section, we consider the distinguishability of three or four quantum states in $2\otimes2$ under LPSE with a few or all of them being Bell states. The resource state can be a nonmaximally or maximally entangled state.

\subsection{Optimal discrimination of Bell states}\label{sec3subsec1}
We start with the discussion about the distinguishability of ensembles of maximally entangled states in $2\otimes2$, i.e., the Bell states. It is already known that to distinguish three or four Bell states, one ebit of entanglement is necessary and also sufficient under LOCC as well as under separable operations (SEP) \cite{Bandyopadhyay15}. If an nMES of the form $\ket{\phi} = a\ket{0}\ket{0} + b\ket{1}\ket{1}$, where $a,b$ are nonzero positive numbers such that $a>b$ and $a^2+b^2=1$, is supplied, to distinguish three or four Bell states, the optimal probability which can be achieved under SEP or LOCC is derived in Ref.~\cite{Bandyopadhyay15}. Here we show the following.
\begin{theorem}\label{theo3}
The optimal success probability in a minimal-error quantum state discrimination between three or four Bell states under LOCC or SEP with an nMES as a resource can be attained by using LP with the same resource.
\end{theorem}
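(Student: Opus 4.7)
The theorem asserts that the LOCC/SEP optimum for distinguishing three or four Bell states with a nonmaximally entangled resource is already attainable by LP alone. Since LP is a strict subclass of LOCC (and LOCC is contained in SEP), the bound from Ref.~\cite{Bandyopadhyay15} is automatically an upper bound on what LP can achieve. Thus the plan reduces to exhibiting an explicit LP protocol whose success probability in a minimal-error state discrimination matches that optimum.

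The LP protocol I propose is symmetric and uses entanglement swapping on each side. Alice holds $A_1$ (the local half of the unknown Bell state) together with $A_2$ (her half of the resource $\ket{\phi}=a\ket{00}+b\ket{11}$), and similarly Bob holds $B_1$ and $B_2$. Both parties simultaneously, and without any classical communication during the measurements, perform a projective measurement in the Bell basis on their own pair of qubits. After the measurement, each outcome pair $(i,j)\in\{1,2,3,4\}^2$ is exchanged as classical data, and a maximum-likelihood decoding rule is applied to guess the identity of the Bell state.

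To compute the success probability I would expand
\[
\ket{B_k}_{A_1B_1}\otimes\ket{\phi}_{A_2B_2} \;=\; \sum_{i,j} c^{(k)}_{ij}\,\ket{B_i}_{A_1A_2}\otimes\ket{B_j}_{B_1B_2},
\]
via the usual entanglement-swapping identity. A short computation shows that for each $k$ only a small number of the sixteen coefficients $c^{(k)}_{ij}$ are nonzero, and that the outcome $i$ on Alice's side determines $j$ on Bob's side up to a two-fold ambiguity whose branching weights are $a^2$ and $b^2$. Arranging the resulting $4{\times}4$ array of joint probabilities $|c^{(k)}_{ij}|^2$ into an explicit table, one reads off the maximum-likelihood decoding: each outcome pair is assigned to the Bell state making it most likely. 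The total success probability is then a closed-form function of $a,b$ in both the three-Bell-state and four-Bell-state cases.

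The main obstacle I anticipate is purely verificational: one must match the closed-form expression coming from the Bell-basis protocol to the explicit optimum proved in Ref.~\cite{Bandyopadhyay15} for LOCC/SEP. For four Bell states the symmetry of the ensemble should collapse everything to a single clean formula. For three Bell states there are a priori four inequivalent choices of which Bell state is excluded, and the decoding rule differs in each, so care is required; however, using the fact that any three Bell states are equivalent under a local Pauli applied to the resource, one expects the optimum to be independent of this choice, in agreement with \cite{Bandyopadhyay15}. Once these formulas coincide, together with the trivial LP $\subseteq$ LOCC direction, the theorem follows.
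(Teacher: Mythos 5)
Your protocol is essentially the paper's own proof: the paper has each party measure the pair (unknown half, resource half) with the coarse projectors $\mathbb{P}_1=\ketbra{00}{00}+\ketbra{11}{11}$, $\mathbb{P}_2=\ketbra{01}{01}+\ketbra{10}{10}$ followed by the Bell basis, which is exactly your single local Bell-basis (entanglement-swapping) measurement plus maximum-likelihood decoding, and it attains the same values $2/3+(2/3)ab$ (three states) and $1/2+ab$ (four states). One small correction: the two-fold ambiguity in Bob's outcome given Alice's has conditional weights $(a\pm b)^2/2=(1\pm 2ab)/2$, not $a^2$ and $b^2$; it is the former that give the maximum-likelihood success probability $1/2+ab$ per ambiguous pair and hence the match with the LOCC/SEP optimum of Ref.~\cite{Bandyopadhyay15}.
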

\begin{proof} Suppose, three Bell states are given along with the resource state $\ket{\phi}$. Then, both the parties can perform a two-outcome projective measurement described by the two projection operators, $\mathbb{P}_1 = \ket{00}\bra{00} + \ket{11}\bra{11}$, $\mathbb{P}_2 = \ket{01}\bra{01} + \ket{10}\bra{10}$, $\mathbb{P}_1+\mathbb{P}_2 = \mathbb{I}$, $\mathbb{I}$ is the identity operator acting on the two-qubit Hilbert space. After completion of the measurements, if the parties get $\mathbb{P}_1\otimes\mathbb{P}_1$ or $\mathbb{P}_2\otimes\mathbb{P}_2$ then the Bell states to be distinguished belong to the subspace spanned by $\{\ket{0}\ket{0}, \ket{1}\ket{1}\}$. This is happening because of the form of the resource state chosen here. Again, if the outcomes are $\mathbb{P}_1\otimes\mathbb{P}_2$ or $\mathbb{P}_2\otimes\mathbb{P}_1$, then the states to be distinguished belong to the subspace spanned by $\{\ket{0}\ket{1}, \ket{1}\ket{0}\}$. In this way, among three Bell states, one state is eliminated as a possible option, or that state is identified as the only option. Again, here the information regarding elimination or identification will be known to both the parties when they discuss the measurement outcomes after all the measurements are accomplished. If one state gets eliminated, then the remaining two states of the ensemble becomes nonorthogonal to each other, optimal discrimination of which can be done when both the parties perform measurements in Bell basis, leading to a probability $(1/2+ab)$ of success. Therefore, average optimal probability to distinguish three Bell states is $(1/3)+(2/3)(1/2 + ab)$ = $2/3 + (2/3)ab$, which is the same as obtained before by LOCC (or SEP).

For four Bell states, both the parties avail the same protocol as described above. But in this case, no states can be distinguished straightaway with some probability. After performing the measurements by both the parties, described by $\mathbb{P}_1$ and $\mathbb{P}_2$, two states get eliminated. So, the remaining states can be distinguished with a probability of $(1/2 + ab)$. This is also the optimal probability, as obtained before by LOCC (or SEP). \end{proof}
Notice that $\mathcal{N} = ab$ is the negativity of the resource state $\ket{\phi}$. Therefore, in this case also the success probability is a function of the entanglement content of the resource state.
\subsection{Perfect discrimination of three quantum states}\label{sec3subsec2}
The discussion in the preceding section makes it clear that when an entangled resource state, $\ket{\phi}$, is available, it is possible to distinguish between the subspaces spanned by $\{\ket{00}, \ket{11}\}$, $\{\ket{01}, \ket{10}\}$, by performing the two-outcome projective measurement onto $\mathbb{P}_1$ and $\mathbb{P}_2$. Thus, we arrive to the following theorem for distinguishing states under LPSE.
\begin{theorem}\label{theo4}
Consider three mutually orthogonal quantum states in $2\otimes2$. Two of them are Bell states, and the third is any state orthogonal to the two Bell states. These states can be perfectly distinguished under LPSE. 
\end{theorem}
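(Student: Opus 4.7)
The plan is to use a two-qubit maximally entangled state $\ket{\phi^+} = (\ket{00} + \ket{11})/\sqrt{2}$ as the shared resource, and to exploit a teleportation-style protocol in which each party performs a Bell-basis measurement on their own pair of qubits (system qubit together with resource qubit), with classical communication invoked only after both measurements are complete so that the scheme indeed sits inside LPSE.

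I would set Alice to hold the system qubit $A_s$ and her resource qubit $A_r$, and Bob to hold $B_s$ and $B_r$. Bob's Bell measurement on $(B_s,B_r)$ is exactly the teleportation Bell measurement, so its effect is to project Alice's pair $(A_s,A_r)$ into the state $(I \otimes P_j)\ket{\chi}$, where $\ket{\chi}$ is the unknown system state and $P_j \in \{I, Z, X, XZ\}$ is the Pauli correction labelled by Bob's outcome $j$. The key ingredient is the well-known fact that each map $I \otimes P_j$ permutes the Bell basis up to phase: hence, if $\ket{\chi}$ is one of the given Bell states $\ket{\chi_1}$ or $\ket{\chi_2}$, then $(I \otimes P_j)\ket{\chi}$ is again a specific Bell state, while if $\ket{\chi} = \ket{\chi_3}$, orthogonality to $\ket{\chi_1}, \ket{\chi_2}$ together with unitarity forces $(I \otimes P_j)\ket{\chi_3}$ into the two-dimensional Bell subspace orthogonal to $(I \otimes P_j)\ket{\chi_1}$ and $(I \otimes P_j)\ket{\chi_2}$.

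Alice's Bell measurement on $(A_s,A_r)$ then produces a Bell outcome $k$, and after the parties exchange $(j,k)$ classically they read off the unique hypothesis consistent with this pair: $\chi_1$ if the post-measurement state equals the $P_j$-image of $\ket{\chi_1}$, $\chi_2$ if it equals the $P_j$-image of $\ket{\chi_2}$, and $\chi_3$ otherwise. The $4 \times 4 = 16$ joint outcomes thus partition into three disjoint blocks of sizes $4$, $4$, and $8$, one block per hypothesis, and this partition does not depend on the specific coefficients of $\ket{\chi_3}$ in its orthogonal complement. This handles every case at once, irrespective of which two Bell states are picked and what the third state happens to be.

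The one point that genuinely needs checking is that for every fixed Bob outcome $j$ the two rotated images $(I \otimes P_j)\ket{\chi_1}$ and $(I \otimes P_j)\ket{\chi_2}$ are distinct Bell states, so that Alice's outcome never collapses the $\chi_1$ branch onto the $\chi_2$ branch. This is immediate from $\chi_1 \neq \chi_2$ and the bijectivity of the Pauli action on the Bell basis, so I do not anticipate any other obstacle; the remainder follows directly from the permutation property and from the orthogonality hypothesis on $\ket{\chi_3}$.
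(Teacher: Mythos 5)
Your proposal is correct and rests on the same protocol as the paper: both parties perform a Bell-basis measurement on their own (system qubit, resource qubit) pair with a shared maximally entangled state, communicate only afterwards, and decode from the joint outcome. The only difference is presentational — you justify correctness uniformly via the teleportation identity and the fact that $I\otimes P_j$ permutes the Bell basis up to phase (so the $16$ outcome pairs split into disjoint blocks of sizes $4$, $4$, $8$ independently of the coefficients of the third state), whereas the paper enumerates the three possible forms of the ensemble and reduces each to the subspace-elimination protocol of Theorem~\ref{theo3} or to the known LPSE discrimination of the four Bell states; your case-free argument is arguably cleaner but establishes the same result by the same means.
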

\begin{proof}Recall that in the protocol considered in Theorem \ref{theo3}, we described that a state can be eliminated by distinguishing the subspaces spanned by $\{\ket{0}\ket{0}, \ket{1}\ket{1}\}$ and $\{\ket{0}\ket{1}, \ket{1}\ket{0}\}$, when an entangled resource state is available. If the resource state is a Bell state ($\frac{1}{\sqrt{2}}\ket{0}\ket{0} + \frac{1}{\sqrt{2}}\ket{1}\ket{1}$), then nonorthogonality does not occur. So, perfect discrimination is possible by performing measurements in the Bell basis by both the parties. As mentioned in the Theorem \ref{theo4}, three states can be described as $\{\ket{0}\ket{0} \pm \ket{1}\ket{1}, a\ket{0}\ket{1} + b\ket{1}\ket{0}\}$, $\{\ket{0}\ket{1} \pm \ket{1}\ket{0}, a\ket{0}\ket{0} + b\ket{1}\ket{1}\}$ or $\{\ket{0}\ket{0} + \ket{1}\ket{1}, \ket{0}\ket{1} + \ket{1}\ket{0}, a(\ket{0}\ket{0} - \ket{1}\ket{1}) + b(\ket{0}\ket{1} - \ket{1}\ket{0})\}$ (normalization ignored), $a,b$ can be any positive numbers such that $a^2+b^2=1$. Notice that the distinguishability of the first two ensembles can be accomplished by the above protocol. For the last ensemble, the distinguishability of the states is the same as the distinguishability of four Bell states under LPSE. 
\end{proof}

\section{Perfect discrimination of states via the incomplete teleportation-based protocol}\label{sec4}
We now discuss the incomplete teleportation-based protocol (iCTP) to distinguish nonmaximally entangled states perfectly. As mentioned earlier, in such a protocol, a Bell state as resource will be provided to the parties for distinguishing the given ensemble. But they are restricted to communicate only one cbit of information. Using these resources, the parties complete the task of distinguishing the given states perfectly. It is possible to show that there exist four nMESs which can be perfectly distinguished by an iCTP. These states can be found from Eq.~(\ref{eq8}) by putting $c=a$ and $d=b$. In this case, the states to be distinguished, are equally entangled. We now show that even if the states are not equally entangled, any three states chosen from the following basis, can be distinguished perfectly by iCTP. We first give the form of four orthogonal nMESs, which are not equally entangled:
\begin{equation}\label{eq8}
\begin{array}{c}
\ket{\phi_1} = a\ket{0}\ket{0}+b\ket{1}\ket{1},~
\ket{\phi_2} = b\ket{0}\ket{0}-a\ket{1}\ket{1},~
\ket{\phi_3} = c\ket{0}\ket{1}+d\ket{1}\ket{0},~
\ket{\phi_4} = d\ket{0}\ket{1}-c\ket{1}\ket{0},~
\end{array}
\end{equation} 
where $a,b,c,d$ are nonzero real numbers such that $a>b$, $c>d$, $a^2+b^2$ = $1$ = $c^2+d^2$, $a\neq c \neq d$. We consider any three states of the above form. Note that which three states are chosen, is known to the parties. Using an entangled resource, it is always possible to eliminate one state (see the method in the proof of Theorem \ref{theo3}). Now if the parties use an MES as a resource, then the remaining states are orthogonal. To distinguish between them, the following strategy can be fixed: Alice performs measurement in the Bell basis and then informs Bob whether she got ``$+$'' or ``$-$'', i.e., whether she got a click in the $\{\ket{00}+\ket{11}, \ket{01} + \ket{10}\}$ space or in the complementary one. Based on that, Bob can choose his basis. However, this method does not work for all four states given in the above equation. It is actually not known whether the above four states can be perfectly distinguished by iCTP.

\section{Discrimination of quantum states in the multi-copy scenario}\label{sec5} 
In this section, we again go back to the state discrimination problem in multi-copy scenario under LP. We first talk about the product state discrimination for any arbitrary Hilbert space and then compare the result with the distinguishability of entangled states under the same setting. 
\subsection{Perfect discrimination of product states}\label{sec5subsec1}
We consider multipartite orthogonal product bases in $d_1\otimes d_2\otimes\cdots\otimes d_m$, where $m$ is the number of parties and $d_i$ is the dimension of $i$-th subsystem. For such a basis, even if it is indistinguishable under LOCC, it can be perfectly distinguished under LP, provided multiple identical copies of the basis states are available. As mentioned earlier, these copies are given one by one. So, measurement on only a single copy at a time is allowed. 
\begin{theorem}\label{theo5}
To distinguish any orthogonal product basis under LP, a finite number of copies of the given states is sufficient.
\end{theorem}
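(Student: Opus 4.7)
The plan is to extend the copy-splitting idea of Theorem~\ref{theo2} to the multipartite setting by allocating one copy per pair of basis indices. Write each basis state as $|\psi_j\rangle = \bigotimes_{i=1}^m |\psi_j^{(i)}\rangle$ for $j=1,\dots,N$, where $N = d_1 d_2 \cdots d_m$. Orthogonality of the basis together with the product structure forces $\langle \psi_j | \psi_k \rangle = \prod_i \langle \psi_j^{(i)} | \psi_k^{(i)} \rangle = 0$ for every $j\neq k$, hence for every such pair there exists at least one party $i = i_{jk}$ with $\langle \psi_j^{(i_{jk})} | \psi_k^{(i_{jk})} \rangle = 0$. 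The goal is to use one copy of the unknown state per pair to ``witness'' it.

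Concretely, I would assign a distinct copy of the unknown state to each of the $\binom{N}{2}$ unordered pairs $\{j,k\}$. On the copy dedicated to $\{j,k\}$, party $i_{jk}$ measures in any projective basis of $\mathbb{C}^{d_{i_{jk}}}$ that contains both $|\psi_j^{(i_{jk})}\rangle$ and $|\psi_k^{(i_{jk})}\rangle$ as basis vectors---such a basis exists because these two vectors are orthogonal, and the remaining $d_{i_{jk}}-2$ vectors may be chosen arbitrarily. The other parties may apply any fixed local projective measurement on that same copy; their outcomes will simply not be consulted for this particular pair. Because the copy-to-pair assignment and the basis choice on each copy depend only on the publicly known copy index, no classical communication is required during the measurements, so the protocol lies strictly within LP.

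Finally I would show that pooling all outcomes at the end uniquely identifies the true state. For the copy assigned to $\{j,k\}$ the decoding rule is: if party $i_{jk}$'s outcome is $|\psi_j^{(i_{jk})}\rangle$, eliminate $k$; if it is $|\psi_k^{(i_{jk})}\rangle$, eliminate $j$; if it is any other basis vector, eliminate both. Suppose the true state is $|\psi_\ell\rangle$. Then for every $m\neq\ell$ the copy assigned to $\{\ell,m\}$ yields outcome $|\psi_\ell^{(i_{\ell m})}\rangle$ with probability one at party $i_{\ell m}$, since this vector is a basis element of the chosen measurement and the global state factorizes across parties; hence this copy eliminates $m$ but never $\ell$. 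Iterating over all $m\neq\ell$ leaves $\ell$ as the unique surviving candidate, and the bound $\binom{N}{2}$ is finite, establishing the theorem. I do not expect a deep obstacle here: the only subtle point is verifying that letting each party precommit to a different basis on each copy is consistent with the LP model, which is the case because the copy index is classical metadata agreed upon before any quantum measurement rather than information extracted during the protocol.
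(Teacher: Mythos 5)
Your protocol is correct, but it is a genuinely different argument from the one in the paper. You dedicate one copy to each unordered pair $\{j,k\}$ of basis states, let the party for whom the pair is locally orthogonal measure in a basis containing both local vectors, and eliminate by a sound pairwise rule; since the true state's local component is a deterministic outcome of every measurement in which it appears as a basis vector, the true state is never eliminated and every other index is eliminated by its pair with the true state. This yields a bound of $\binom{N}{2}$ copies with $N=d_1d_2\cdots d_m$, and every step (existence of the orthogonal party for each pair, determinism of the outcome on product states, soundness of the elimination rule) is fully justified from the product structure alone. The paper instead packs many eliminations into each copy: on the first copy the $i$-th party's $d_i$-outcome measurement removes $d_i-1$ candidates, and each subsequent copy removes at least $m$ more (one per party), giving roughly $1+\lceil(\prod_i d_i-\sum_i(d_i-1))/m\rceil$ copies, which is linear rather than quadratic in $N$ and is what lets the paper claim an improvement over the bound of Walgate et al.; on the other hand, the paper's claim that every party can always eliminate at least one further state on every later copy is asserted rather than argued in detail. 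In short, your route trades a much worse copy count for a more elementary and airtight proof of the bare finiteness statement, which is all the theorem asks for. One cosmetic point: you use $m$ both for the number of parties and as a running index over states in the final paragraph; rename one of them.
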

\begin{proof} Within an $m$-partite basis in $d_1\otimes d_2\otimes\cdots\otimes d_m$, the total number of states is $d_1d_2\dots d_m$. For product states, they maintain orthogonality from at least any one party's side. In this way, when the full basis is formed, orthogonality ``saturates'', that is, there are no other states which are again orthogonal to all the states of the given basis. Therefore, once the first copy is given to the $m$ parties, the first party can choose a basis to perform a $d_1$-outcome projective measurement. By this measurement, the first party can always eliminate $d_1-1$ states. In this way, $i$-th party can perform a $d_i$-outcome projective measurement to eliminate $d_i-1$ states. So, using the first copy all the parties perform measurements and eliminate at least $\sum_i(d_i-1)$ states. The remaining states are $\Pi_i d_i - \sum_i(d_i-1)$. The second copy onward, it may not be possible to eliminate so many states. But at least one state can always be eliminated by performing a projective measurement by every party, and thereby for each copy (after the first copy), $m$ states can be eliminated. Let the smallest integer be $x$ which is greater than or equal to $[\Pi_i d_i - \sum_i(d_i-1)]/m$. Therefore, the total number of copies required to complete the task is $(x+1)$. Note that using a particular copy it may possible to eliminate more states than what we have indicated in the above. In such a situation for later copies the number of states which can be eliminated, can be reduced. This is simply because there are not too many states left for the elimination procedure.
\end{proof}
This is an improved bound with respect to the bound derived in Ref.~\cite{Walgate00}. Though we have considered a restricted class of states (complete orthogonal product basis), but we have achieved the present bound under a very restricted class of operations, viz. LP, as compared to LOCC in Ref.~\cite{Walgate00}.

\subsection{Comparison with entangled states}\label{sec5subsec2}
The strategy that we adopted in the preceding section may not work for entangled states even if there are only two orthogonal states. Consider the problem of distinguishability of the two states,
\begin{equation}\label{eq9}
a\ket{0}\ket{0} + b\ket{1}\ket{1},~ b\ket{0}\ket{0} - a\ket{1}\ket{1},
\end{equation}
under LP, and when multiple copies - provided one by one - are available. In the above equation, $a,b$ are nonzero real numbers such that $a>b$, $a^2+b^2$ = $1$. Consider the first copy provided to the parties. As long as projective measurements are used, the party who performs the measurement first can only measure in the eigenbasis of $\sigma_x$, i.e., in the $\ket{\pm} = (1/\sqrt{2})(\ket{0}\pm\ket{1})$ basis to keep the post-measurement states orthogonal. We consider here those measurements which preserves orthogonality at the other side. But depending on the two different measurement outcomes, the sets of orthogonal states which are produced on Bob's side are different. Bob may get $\{a\ket{0} + b\ket{1}, b\ket{0} - a\ket{1}\}$ or $\{a\ket{0} - b\ket{1}, b\ket{0} + a\ket{1}\}$. However, no communication is allowed from Alice's side to Bob. In this situation, Bob has to choose his measurement randomly. Without loss of generality Bob can decide that on the first copy he measures in $\{a\ket{0}+b\ket{1}, b\ket{0} - a\ket{1}\}$ and for the second copy he measures in $\{a\ket{0}-b\ket{1}, b\ket{0} + a\ket{1}\}$. But the problem occurs if for the first copy Alice gets ``$-$'' and for the second copy Alice gets ``$+$''. In fact, this situation may arise for every copy provided to the parties. So, this is a type of nonlocality (here it means, indistinguishability of orthogonal states under a particular class of operations acting locally) exhibited by the entangled states under the LP class of operations. Local indistinguishability in the many-copy scenario was studied earlier in \cite{Bandyopadhyay11, Yu14, Li17}. But there, the allowed operations were much stronger than LP. Thus, they considered at least a mixed state in the ensemble, because any $N$ pure states can be perfectly distinguished by LOCC if $N-1$ copies of each state are available \cite{Walgate00, Bandyopadhyay11}. 

Notice that for a particular form of two orthogonal states, which can be found by putting $\alpha = \pi/2$ in Eq.~(\ref{eq6}), they can be perfectly distinguished by LPSE when one ebit of entanglement is present as resource (for LOSE see Ref.~\cite{Groisman15}). But from the above discussion, it is clear that these two states can not be distinguished by LP in the many-copy scenario. In Section \ref{sec3}, we had identified a scenario where LPSE turned out to be weaker in comparison to LP assisted with multiple copies of the ensemble states. The tables, however, have turned here, and we have now an example where the opposite is true. Moreover, the result also implies that in the many-copy scenario, (even if $\alpha = \pi/2$), the states of Eq.~(\ref{eq6}) are not equivalent to that of Eq.~(\ref{eq4}), with respect to discrimination under LP, though the equivalence is known to be true under LOSE \cite{Groisman15}.

\section{Conclusion}\label{sec6}
We have considered  different tasks of quantum state discrimination under LP in limited communication scenarios. We have tried to shed light on the role of CC to accomplish different state discrimination tasks. In most of the cases, we have considered no CC (during the measurements), and instead have allowed the utilization of other resources, viz., a shared entangled state or multiple copies of the ensemble states to be distinguished. We have also compared the different types of resources where possible. It is observed that if multiple identical copies of the given states are available, then it is a quite effective resource for distinguishing product states under LP. However, for distinguishing entangled states, entanglement can be an effective resource under LP. We find that ensembles of entangled states can exhibit an indistinguishability property in the many-copy scenario under LP which never occurs for product states.

\section*{Acknowledgment}
The authors are grateful to Ujjwal Sen for his valuable comments, based on which the paper was improved significantly.

\bibliography{ref}
\end{document}